\newcommand{\hide}[1]{} 
\newcommand{\vpara}[1]{\noindent\textbf{#1 }}
\newcommand{\para}[1]{\noindent\textbf{#1 }}
\newcommand{\beq}[1]{\begin{equation}#1\end{equation}}
\DeclareMathOperator*{\argmin}{arg\,min}
\DeclareMathOperator*{\diag}{diag}
\DeclareMathOperator*{\trace}{trace}
\theoremstyle{definition}
\newtheorem{definition}{Definition}
\newtheorem*{remark}{Remark}
\newtheorem{theorem}{Theorem}
\title{Representation Learning for Scale-free Networks}
\author{
Rui Feng$^*$, Yang Yang$^\dag$\thanks{Equal contribution. Ordering determined by dice rolling.}, Wenjie Hu, Fei Wu, and Yueting Zhuang\\
College of Computer Science and Technology, Zhejiang University, China\\
$^\dag$Corresponding author: yangya@zju.edu.cn\\
}
\begin{document}
\maketitle

\begin{abstract}
Network embedding aims to learn the low-dimensional representations of vertexes in a network, while structure and inherent properties of the network is preserved. 
Existing network embedding works primarily focus on preserving the microscopic structure, such as the first- and second-order proximity of vertexes, while the macroscopic scale-free property is largely ignored.
Scale-free property depicts the fact that vertex degrees follow a heavy-tailed distribution (i.e., only a few vertexes have high degrees) and is a critical property of real-world networks, such as social networks.
In this paper, we study the problem of learning representations for scale-free networks. 
We first theoretically analyze the difficulty of embedding and reconstructing a scale-free network in the Euclidean space, by converting our problem to the sphere packing problem. 
Then, we propose the ``degree penalty'' principle for designing scale-free property preserving network embedding algorithm: punishing the proximity between high-degree vertexes. 
We introduce two implementations of our principle by utilizing the spectral techniques and a skip-gram model respectively. 
Extensive experiments on six datasets show that our algorithms are able to not only reconstruct heavy-tailed distributed degree distribution, but also outperform state-of-the-art embedding models in various network mining tasks, such as vertex classification and link prediction. 
\end{abstract}

\section{Introduction}	
\label{sec:intro}
Network analysis has attracted considerable research efforts in many areas of artificial intelligence, as networks are able to encode rich and complex data, such as human relationships and interactions. 
One major challenge of network analysis is how to represent network properly so that the network structure can be preserved.
The most straightforward way is to represent the network by its adjacency matrix. 
However, it suffers from the data sparsity. 
Other traditional network representation relies on handcrafted network feature design (e.g., clustering coefficient), which is inflexible, non-scalable, and requires hard human labor. 

In recent years, network representation learning, also known as network embedding, has been proposed and aroused considerable research interest. 
It aims to automatically project a given network into a low-dimensional latent space, and represent each vertex by a vector in that space. 
For example, a number of recent works apply advances in natural language processing (NLP), most notably models known as word2vec~\cite{mikolov2013efficient}, to network embedding and propose word2vec-based algorithms, such as DeepWalk~\cite{perozzi2014deepwalk} and node2vec~\cite{grover2016node2vec}. 
Besides, researchers also consider network embedding as part of dimensionality reduction techniques.
For instance, Laplacian Eigenmap (LE)~\cite{belkin2003laplacian} aims to learn the low-dimensional representation to expand the manifold  where the data lie. 

Essentially, these methods mainly focus on preserving microscopic structure of network, like pairwise relationship between vertexes.
Nevertheless, scale-free property, one of the most fundamental macroscopic properties of networks, is largely ignored. 

\begin{figure}[t]
\centering
\subfigure[Original]{
\label{fig:deg_citation_original}
\includegraphics{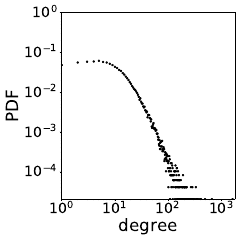}
}
\subfigure[LE]{
\label{fig:deg_citation_recovered_unmod}
\includegraphics{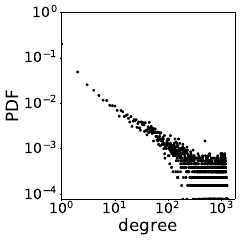}
}
\subfigure[DP-Walker]{
\label{fig:deg_citation_recovered}
\includegraphics{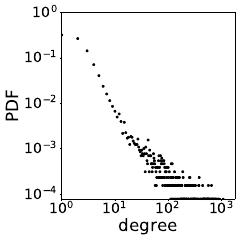}
}
\caption{Scale-free property of real-world networks. (a) is the degree distribution of an academic network. (b) and (c) are respectively the degree distribution of 
of the network reconstructed based on vertex representations learned by Laplacian Eigenmap (LE) and our proposed method (DP-Walker).
\normalsize }
\label{fig:degree}
\end{figure}

Scale-free property depicts that the vertex degrees follow a power-law distribution, which is a common knowledge for many real-world networks. 
We take an academic network 
as the example, where each edge indicates if a vertex (researcher) has cited at least one publication of another vertex (researcher). 
Figure~\ref{fig:deg_citation_original} demonstrates the degree distribution of this network. 
The linearity on log-log scale suggests a power-law distribution: the probability decreases when the vertex degree grows, with a long tail tending to zero~\cite{faloutsos1999power,adamic1999nature}. 
In other words, there are only a few vertexes of high degrees. 
The majority of vertexes connected to a high-degree vertex is, however, of low degree, and not likely connected to each other. 

Moreover, compared with the microscopic structure, the macroscopic scale-free property imposes a higher level constraint on the vertex representations: 
only a few vertexes can be close to many others in the learned latent space. 
Incorporating scale-free property in network embedding can reflect and preserve the sparsity of real-world networks and, in turn, provide effective information to make the vertex representations more discriminative.

In this paper, we study the problem of learning scale-free property preserving network embedding. 
As the representation of a network, vertex embedding vectors are expected to well reconstruct the network. 
Most existing algorithms learn network embedding in the Euclidean space.
However, 
we find that most traditional network embedding algorithms will overestimate the number of higher degrees.
Figure~\ref{fig:deg_citation_recovered_unmod} gives an example, where the representation is learned by Laplacian Eigenmap. 
We analyze and try to understand this theoretically, and study the feasibility of recovering power-law distributed vertex degree in the Euclidean space, by converting our problem to the Sphere Packing problem. 
Through our analysis we find that theoretically, moderately increasing the dimension of embedding vectors can help to preserve the scale-free property. 
See details in Section~\ref{sec:approach}. 


Inspired by our theoretical analysis, we then propose the \textit{degree penalty} principle for designing scale-free property preserving network embedding algorithms in Section~\ref{sec:model}: 
punishing the proximity between vertexes with high degrees. 
We further introduce two implementations of our approach based on the spectral techniques~\cite{belkin2003laplacian} and the skip-gram model~\cite{mikolov2013efficient}. 
As Figure~\ref{fig:deg_citation_recovered} suggests, our approach can better preserve the scale-free property of networks.
In particular, the Kolmogorov-Smirnov (aka. K-S statistic) distance between the obtained degree distribution and its theoretical power-law distribution is $0.09$, while the value for the degree distribution obtained by LE is $0.2$ (the smaller the better).   

To verify the effectiveness of our approach, we conduct experiments on both synthetic data and five real-world datasets in Section~\ref{sec:exp}. 
The experimental results show that our approach is able to not only preserve the scale-free property of networks, 
but also outperform several state-of-the-art embedding algorithms in different network analysis tasks. 

We summarize our contribution of this paper as follows:

\begin{itemize}
\item We analyze the difficulty and feasibility of reconstructing a scale-free network based on learned vertex representations in the Euclidean space, by converting our problem to the Sphere Packing problem. 
\item We propose the \textit{degree penalty} principle and two implementations 
to preserve the scale-free property of networks and 
improve the effectiveness of vertex representations. 
\item We validate our proposed principle by conducting extensive experiments and find that our approach achieves a significant improvement on six datasets and three tasks compared to several state-of-the-art baselines. 
\end{itemize}

\hide{
The rest of the paper is organized as follows. 
We give theoretical analysis in Section~\ref{sec:approach}, based on which we propose and introduce our method in Section~\ref{sec:model}. 
In Section~\ref{sec:exp}, we report the experimental results and then review the related work in Section~\ref{sec:related}. 
At last, Section~\ref{sec:conclusion} concludes the paper.
}

\section{Theoretical Analysis}
\label{sec:approach} 
In this section, we try to study why most network embedding algorithms will overestimate higher degrees theoretically, 
and analyze if there exists a solution of scale-free property preserving network embedding in the Euclidean space. 
This section also provides intuitions of our approach in Section~\ref{sec:model}. 

\subsection{Preliminaries}
\label{sec:prelim}

\vpara{Notations.}
We consider an undirected network $G=(V, E)$, where $V=\{v_1, \cdots, v_n\}$ is the vertex set containing $n$ vertexes and $E$ is the edge set. Each $e_{ij} \in E$ indicates an undirected edge between $v_i$ and $v_j$. 
We define the adjacency matrix of $G$ as $\mathbf{A}=[A_{ij}] \in \mathcal{R}^{n \times n}$, where $A_{ij} = 1$ if $e_{ij}\in E$ and $A_{ij} = 0$ otherwise.
Let $\mathbf{D}$ be a diagonal matrix where $D_{ii}=\sum_{j}A_{ij}$ is the degree of $v_i$. 

\vpara{Network embedding.} 
In this paper, we focus on the representation learning for undirected networks. 
Given an undirected graph $G=(V, E)$, the problem of graph embedding aims to represent each vertex $v_i \in V$ into a low-dimensional space $R^k$, i.e., learning a function $\mathbf{f}:V \mapsto\mathbf{U}^{n\times k}$, where $\mathbf{U}$ is the embedding matrix, $k \ll n$ and network structures can be preserved in $\mathbf{U}$. 

\hide{
There are various ways to quantify to what extent network structures have been preserved in $\mathbf{U}$. 
For instance, some methods like DeepWalk~\cite{perozzi2014deepwalk} and node2vec~\cite{grover2016node2vec} take into account paths in $G$ by assuming vertexes appearing in the same path tend to be more similar. 
Besides, Laplacian Eigenmaps  (LE)~\cite{belkin2003laplacian} aims to learn the low-dimensional representation to expand the manifold where vertexes lie. 
Few work study a more fundamental perspective of network structure: degree distribution, which we will introduce later. 
}

\vpara{Network reconstruction.}
As the representation of a network, the learned embedding matrix is expected to well reconstruct the network. 
In particular, one can reconstruct the network edges based on distances between vertexes in the latent space $\mathcal{R}^k$. 
For example, the probability of an edge existing between $v_i$ and $v_j$ can be defined as 

\beq{
\label{eq:reconstruct}
	p_{i,j} = \frac{1}{1+e^{\Vert \mathbf{u_i} - \mathbf{u_j} \Vert}} 
}

\noindent where the Euclidean distance between embedding vectors 
$\mathbf{u_i}$ and $\mathbf{u_j}$ of the vertex $v_i$ and $v_j$, in respective, 
is considered.
In practice, a threshold $\varepsilon \in [0, 1]$ is chosen and an edge $e_{ij}$ will be created if $p_{i, j} \geq \varepsilon$. 
We call the above method as $\varepsilon$-NN, which is geometrically informative and a common method used in many existing work \cite{shaw2009spe,belkin2003laplacian,alanis2016labne}.
 \begin{figure*}[t]
 \centering
   \includegraphics{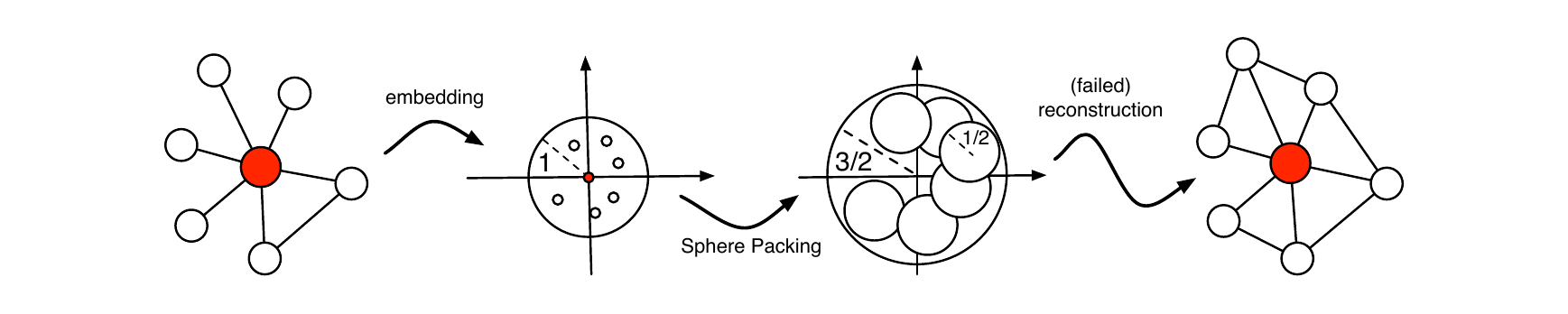}
   \caption{An illustration of an ego-network centered by a hub vertex, a potential embedding solution, which is equivalent to a sphere packing solution, and leads to a failed reconstruction, where higher degrees are overestimated. }
   \label{fig:example}
 \end{figure*}

\subsection{Reconstructing Scale-free Networks}
Given a network, we call it as a scale-free network, when its vertex degrees follow a power-law distribution.
In other words, 
there are only a few vertexes of high degrees. 
The majority of vertexes connected to a high-degree vertex is, however, of low degree, and not likely connected to each other. 
Formally, the probability density function of vertex degree $D_{ii}$ has the following form:

\beq{
\label{eq:powerlaw}
	 p_{D_{ii}}(d) = C d^{-\alpha} , \alpha > 1, d>d_{\min{}} > 0  
}

\noindent where $\alpha$ is the exponent parameter and $C$ is the normalization term. 
In practice, the above power-law form only applies for vertexes with degrees greater than a certain minimum value $d_{\min}$~\cite{aaron2009}. 

However, it is difficult to reconstruct a scale-free network in the Euclidean space by $\varepsilon$-NN. 
As we see in Figure~\ref{fig:deg_citation_recovered_unmod}, the higher degrees will be overestimated. 
We aim to explain this theoretically.

\vpara{Intuition.}
While reconstructing the network by $\varepsilon$-NN, we select a certain $\varepsilon$, and for a vertex $v_i$ with embedding vector $\mathbf{u_i}$, we regard all points that fall in the closed ball of radius $\varepsilon$ centered at $\mathbf{u}_i$, denoted by $B(\mathbf{u_i}, \varepsilon)$, as ones having edges with $v_i$.
When $v_i$ is a high-degree vertex, there will be many points in $B(\mathbf{u_i}, \varepsilon)$. 
We expect these points are far away from each other, keeping more vertexes with low degree and thus in turn keep the power-law distribution. 
However, intuitively, as these points are placed in the same closed ball $B(\mathbf{u_i}, \varepsilon)$, it will be more likely that their distances from each other are less than $\varepsilon$. 
As a result, there will be many edges created among those points, which violates the assumption of the scale-free property.

Following the above idea, we introduce a theorem below, which is discussed in $\mathcal{R}^k$, and
without loss of generality, we set $\varepsilon$ as $1$.


\hide{
\subsubsection{Scale-free property and power-law distribution}
In probability theory, we say a positive real-valued random variable $X$ follows power-law distribution if the probability density function (p.d.f.) $p_X$ has the following form:

\begin{equation}
	 p_X(t) = C t^{-\alpha} , \alpha > 1, x_{\min{}} > 0  \label{powerlaw:pdf}
\end{equation}

$X$ cannot be defined on the whole range of $[0, \infty)$, because $\int_{0}^{\infty} t^{-\alpha}dt = \infty$. This does not damage the applicably of power-law, because in practice one often finds that power-law only applies for values greater than a certain minimum $x_{\min{}}$. \cite{aaron2009}

The constant $C$ in (\ref{powerlaw:pdf}) is provided by normalization: 
$$ \int_{x_{\min{}}}^\infty p_X(t)dt = 1 $$

We say graph (aka. network) is scale-free, is the degree distribution of the graph follows a power-law distribution.

\subsubsection{Graph Embedding and Reconstruction}

Let $G=(V,E)$ be a graph. The problem of graph embedding is to find a mapping $f:G\rightarrow \mathcal{R}^d$ for some positive integer $d << |V|$, which preserves proximity between nodes of the original graph. $\mathcal{R}^d$ is called the feature space. 
	
\subsubsection{Graph Reconstruction}
\label{sec:reconst}

In this paper we use the Euclidean distance as our measurement of similarities. Specifically, we use 

\begin{equation}
	p_{x,y} = \frac{1}{1+e^{\Vert x-y \Vert}} 
\end{equation}

as the probability of the link between $x, y$. 

To construct the desired graph, we choose a threshold $\varepsilon$, and we assign link between $x, y$ if $p(x,y) \geq \varepsilon$.

\subsection{Desired Result}
We hope that the embedding implicitly encodes as much informations as possible about the original degree distributions. To this end, it is desirable to reconstruct a graph, whose degrees can be expressed as a linear or monotone function of the original degrees. The reconstructed degrees would then preserve the ranks of the original degrees. 
}

\hide{
\subsection{Graph Reconstruction of Scale-free Property}
\subsubsection{Problem Formation}
We discuss the ability of graph reconstruction using $\varepsilon NN$ to recover power-law. 
In a power-law graph, there are only few vertexes of high degrees. The majority of vertexes connected to a high-degree vertex is, however, of low degree, and not likely connected to each other. While reconstruction, we select a certain $\varepsilon$, and we regard all points that fall in the closed ball of radius $\varepsilon$ centred at $x$, denoted by $B(x, \varepsilon)$, as having an edge with the embedded representation $x$. However, intuitively, there can't be as many points in $B(x, \varepsilon)$ as we desire, whose distances from each other are larger than $\varepsilon$. We formalize this idea in the following discussion. We shall discuss in $\mathcal{R}^n$. Without loss of generality, we shall set $\varepsilon$ as $1$.

We start by giving a simple theorem.
}

\begin{theorem}[Sphere Packing]
\label{theorem:packing}
There are $m$ points in $B(0, 1)$ whose distances from each other are larger than or equal to $1$, if and only if, there exists $m$ disjoint spheres of radius $1/2$ in $B(0,3/2)$.
\label{sp_thm1}
\end{theorem}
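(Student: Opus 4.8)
The plan is to prove the equivalence constructively, via the obvious correspondence that sends a finite point set $\{x_1,\dots,x_m\}$ to the family of balls $\{B(x_i,1/2)\}_{i=1}^m$ centered at those points. Both implications then reduce to two elementary triangle-inequality estimates, used in opposite directions; since this correspondence is its own inverse (a packing of balls of radius $1/2$ is recovered from the set of its centers), the two directions are essentially symmetric and require no separate idea.

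For the forward direction, suppose $x_1,\dots,x_m \in B(0,1)$ satisfy $\Vert x_i - x_j\Vert \ge 1$ for all $i \ne j$. I would verify two facts. First, containment: for any $y \in B(x_i,1/2)$, the triangle inequality gives $\Vert y\Vert \le \Vert x_i\Vert + \Vert y - x_i\Vert \le 1 + \frac{1}{2} = \frac{3}{2}$, so each $B(x_i,1/2)$ lies inside $B(0,3/2)$. Second, disjointness of interiors: if some $y$ lay in the interiors of both $B(x_i,1/2)$ and $B(x_j,1/2)$ with $i\ne j$, then $\Vert x_i - x_j\Vert \le \Vert x_i - y\Vert + \Vert y - x_j\Vert < \frac{1}{2} + \frac{1}{2} = 1$, contradicting the separation hypothesis. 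Hence $\{B(x_i,1/2)\}$ is a packing of $m$ disjoint spheres of radius $1/2$ in $B(0,3/2)$.

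For the converse, start from $m$ spheres of radius $1/2$ with disjoint interiors, with centers $c_1,\dots,c_m$, all contained in $B(0,3/2)$. Running the same estimates backwards: from $B(c_i,1/2)\subseteq B(0,3/2)$, taking the supremum of $\Vert y\Vert$ over $y\in B(c_i,1/2)$ yields $\Vert c_i\Vert + \frac{1}{2} \le \frac{3}{2}$, i.e.\ $c_i \in B(0,1)$; and from disjointness of the interiors of $B(c_i,1/2)$ and $B(c_j,1/2)$ one gets $\Vert c_i - c_j\Vert \ge 1$, since otherwise their common midpoint would be an interior point of both. Thus the centers form $m$ points in $B(0,1)$ that are pairwise at distance at least $1$.

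There is no deep obstacle here; the only point needing care is the strict/non-strict (open/closed) bookkeeping. Because the separation bound is ``$\ge 1$'', two balls of radius $1/2$ may be externally tangent, so ``disjoint spheres'' must be read in the standard packing sense of having disjoint interiors (equivalently, one works with open balls of radius $1/2$ throughout); with that convention stated up front, the inequalities above go through exactly as written, and the proof is a two-line application of the triangle inequality in each direction.
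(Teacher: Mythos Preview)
Your proposal is correct and follows exactly the same idea as the paper: the natural bijection between point configurations and the balls of radius $1/2$ centered at those points, with both containment and separation checked via the triangle inequality. The paper's own proof is a one-sentence sketch (centers of radius-$1/2$ spheres in $B(0,3/2)$ lie in $B(0,1)$, and two disjoint such spheres have centers at distance $\ge 1$); your version simply spells out both directions and the open/closed bookkeeping more carefully, which is an improvement in rigor but not a different approach.
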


\begin{proof}
The center of any sphere of radius $1/2$ in $B(0, 3/2)$ falls in $B(0, 1)$ and the distance between any two centers of two different spheres of radius $1/2$ is larger than or equal to 1.  
\end{proof}

\begin{remark}
Theorem~\ref{theorem:packing} converts our problem of reconstructing a scale-free network to the \textit{Sphere Packing Problem}, which seeks to find the optimal packing of spheres in high dimensional spaces~\cite{cohn2003new,vance2011sphere,venkatesh2012sphere}. 
Figure~\ref{fig:example} gives an example, where a network centered by a high-degree vertex is embedded into a two-dimensional space. 
The embedding result corresponds to an equivalent Sphere Packing solution, which fails to place enough disjoint spheres and leads to a failed network reconstruction (many nonexistent edges are created). 
Formally, we define the packing density as follows: 
\end{remark}

\begin{definition}[Packing Density]
The packing density is the fraction of the space filled by the spheres making up the packing. 
For convenience, we define the optimal sphere packing density in $\mathcal{R}^k$ as $\Delta_k$, which means no packing of spheres in $\mathcal{R}^k$ achieves a packing density larger than $\Delta_k$.
\end{definition}

As Theorem~\ref{theorem:packing} suggests, we aim to find a packing solution with large density so that more points in a closed sphere can keep their distance larger than 1. 
However, in general cases, finding the optimal packing density 
remains an open problem. 
Still, 
we are able to derive the upper and lower bounds of $\Delta_k$ for sufficiently large $k$.

\begin{theorem}[Upper and lower bounds for $\Delta_k$]
$$ -1\leq \lim_{k\rightarrow\infty} \frac{1}{k}\log_2\Delta_k\leq-0.599 $$
\normalsize Specifically, we have 
\begin{equation}
\Delta_k \geq 2^{-k}, k \geq 1
\label{lower_bound}
\end{equation}
And the following inequality holds for sufficiently large $k$:
\begin{equation}
\Delta_k \leq 2^{-0.599k} \label{upper_bound}
\end{equation}
\end{theorem}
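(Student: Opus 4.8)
\noindent\emph{Proof plan.} The two displayed inequalities are logically independent, and the bracketing of $\lim_{k\to\infty}\tfrac1k\log_2\Delta_k$ is obtained by applying $\tfrac1k\log_2(\cdot)$ to each: \eqref{lower_bound} controls the $\liminf$ and \eqref{upper_bound} the $\limsup$ (establishing that the limit itself exists would require a separate argument and is not needed here). So the plan is to prove \eqref{lower_bound} by an elementary saturated-packing argument and to derive \eqref{upper_bound} from the linear-programming bound of Kabatiansky and Levenshtein.

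\noindent\emph{Lower bound.} I would fix a \emph{saturated} packing $\mathcal{P}$ of balls of radius $1/2$ in $\mathcal{R}^k$, i.e.\ one in which no further ball of radius $1/2$ can be inserted without overlap. The concentric balls of radius $1$ then cover $\mathcal{R}^k$: if some point $x$ were at distance $>1$ from every center of $\mathcal{P}$, the ball $B(x,1/2)$ would be disjoint from every ball of $\mathcal{P}$, contradicting saturation. Hence the radius-$1$ balls at the centers of $\mathcal{P}$ form a covering, so their volume per unit volume of space is at least $1$; since a radius-$1$ ball has $2^k$ times the volume of a radius-$1/2$ ball, $\mathcal{P}$ has density at least $2^{-k}$. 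Taking the supremum over all packings gives $\Delta_k\ge 2^{-k}$ for every $k\ge 1$, i.e.\ $\tfrac1k\log_2\Delta_k\ge-1$.

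\noindent\emph{Upper bound.} This is the substantive part. First I would pass from packing densities to \emph{spherical codes}: by a standard local (averaging) argument, $\Delta_k$ is bounded above in terms of $A(k,\theta)$, the maximum number of points on a sphere $S^{k-1}$ with pairwise angular separation at least $\theta$. Next I would invoke Delsarte's linear-programming method: writing $f=\sum_i f_i G_i^{(k)}$ in Gegenbauer (ultraspherical) polynomials, any $f$ with all $f_i\ge 0$, $f_0>0$, and $f(t)\le 0$ on $[-1,\cos\theta]$ satisfies $A(k,\theta)\le f(1)/f_0$~\cite{cohn2003new}. The crux is the choice of $f$: Kabatiansky and Levenshtein construct a near-optimal feasible polynomial from Jacobi polynomials (shifted via the first zero of an appropriate Bessel function) and then optimize over $\theta$; the resulting asymptotics give $A(k,\theta)\le 2^{ck(1+o(1))}$ for a constant $c$ which, after the density--code reduction, yields $\Delta_k\le 2^{-0.599k}$ for all sufficiently large $k$, hence $\limsup_{k\to\infty}\tfrac1k\log_2\Delta_k\le-0.599$.

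\noindent\emph{Main obstacle.} The lower bound is routine. The real difficulty is entirely in the upper bound: producing a polynomial feasible for the linear program that is at once tight enough and amenable to closed-form asymptotics, and then extremizing over $\theta$ to extract the constant $0.599$. I would not reprove this from scratch but cite the Kabatiansky--Levenshtein bound together with the linear-programming framework of \cite{cohn2003new}; the reduction from packing density to spherical codes and the Gegenbauer-expansion setup are the steps I would write out in detail.
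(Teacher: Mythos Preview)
Your proposal is correct, and in fact it goes well beyond what the paper itself does. The paper does not give a proof of this theorem at all: it simply states that Eq.~\eqref{upper_bound} is one of the best known upper bounds for $k\ge 115$ and then refers the reader to Kabatiansky--Levenshtein for the full proof. There is no argument in the paper for either inequality.

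Your treatment is therefore strictly more substantive. For the lower bound you supply the standard saturated-packing (Minkowski--Hlawka type) argument, which is correct and self-contained; the paper does not even hint at how $\Delta_k\ge 2^{-k}$ is obtained. For the upper bound you do the same thing the paper does---defer to Kabatiansky--Levenshtein---but you additionally outline the mechanism (reduction to spherical codes, Delsarte's linear-programming bound with Gegenbauer expansions, the Jacobi-polynomial test function, optimization over $\theta$), which the paper omits entirely. Your remark that existence of the limit is a separate issue and that \eqref{lower_bound}, \eqref{upper_bound} really control $\liminf$ and $\limsup$ is also a point the paper glosses over. In short, the approaches are aligned in that both ultimately cite Kabatiansky--Levenshtein for the hard direction, but your plan buys the reader an actual proof of the easy direction and a roadmap for the cited one.
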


\noindent Eq.~\ref{upper_bound} is one of the best upper bound when $k \geq 115$~\cite{cohn2014sphere}.
The proof of the above theorem can be found in the work of~\citeauthor{kabatiansky1978bounds}.

\begin{theorem}

Suppose $x$ is in $\mathcal{R}^k$, $\varepsilon > 0$, and there can be at most $M_k$ points in $B(x, \varepsilon)$ whose distances from each other are larger than $\varepsilon$, then

\begin{eqnarray}
 \left\lfloor\left(\frac{3}{2}\right)^k\right\rfloor \leq M_k \leq \left\lfloor 3^k 2^{-0.599k}\right\rfloor  \label{bound_M}  \end{eqnarray}
where $\left\lfloor\cdot\right\rfloor$ means taking the integer part. The upper bound holds for sufficiently large 
\end{theorem}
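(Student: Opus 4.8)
The plan is to deduce both inequalities from the correspondence of Theorem~\ref{theorem:packing} together with the bounds on the optimal packing density $\Delta_k$ recorded above, after a normalization. First I would note that replacing a configuration by its image under $y\mapsto (y-x)/\varepsilon$ preserves membership in the ball and preserves every comparison of a pairwise distance with the radius; hence $M_k$ does not depend on $x$ or $\varepsilon$, and we may take $x=0$ and $\varepsilon=1$. With this normalization $M_k$ is the largest number of points of $B(0,1)$ that are pairwise at distance at least $1$ (the strict-versus-nonstrict distinction appearing in the two statements being immaterial for the exponential estimates at hand, and absorbable into a limiting argument). Theorem~\ref{theorem:packing} then identifies $M_k$ with the maximum number of pairwise disjoint balls of radius $1/2$ that fit inside $B(0,3/2)$.

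Next I would convert this count into a density statement. Let $\omega_k$ denote the volume of the unit ball in $\mathcal{R}^k$. A family of $N$ disjoint radius-$1/2$ balls inside $B(0,3/2)$ occupies volume $N\,2^{-k}\omega_k$ inside a region of volume $(3/2)^k\omega_k$, i.e., it fills an $N\,3^{-k}$ fraction of that region; comparing against the best possible fill yields $N\le \Delta_k\,3^k$, whereas conversely one can lay down a packing of $B(0,3/2)$ realizing (essentially) the density $\Delta_k$, so $M_k\ge \Delta_k\,3^k$ as well. Substituting the bounds on $\Delta_k$ finishes the argument: from $\Delta_k\ge 2^{-k}$ (Eq.~\ref{lower_bound}) we get $M_k\ge 2^{-k}3^k=(3/2)^k$, hence $M_k\ge\lfloor(3/2)^k\rfloor$ since $M_k$ is an integer; and from $\Delta_k\le 2^{-0.599k}$ (Eq.~\ref{upper_bound}), valid for all sufficiently large $k$, we get $M_k\le 3^k 2^{-0.599k}$, hence $M_k\le\lfloor 3^k 2^{-0.599k}\rfloor$ — the ``sufficiently large $k$'' proviso being inherited verbatim from Eq.~\ref{upper_bound}.

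The step I expect to be the main obstacle is the one that trades the \emph{asymptotic} density $\Delta_k$, a property of packings of all of $\mathcal{R}^k$, for the density of a packing confined to the bounded convex body $B(0,3/2)$, whose radius is only three times the radius of the packed balls, so boundary waste is genuinely present. For the upper bound this means checking that confining a radius-$1/2$ packing to a ball cannot raise its density above $\Delta_k$; for the lower bound it means exhibiting a concrete packing of $B(0,3/2)$ — for instance by intersecting the ball with a good lattice packing or a Minkowski--Hlawka-type packing, or by random placement followed by deletion of close pairs — that still attains density at least $2^{-k}$ despite the loss near the bounding sphere. Carrying this out cleanly, or at least showing that the boundary correction costs only a sub-exponential factor in $N$ (which is all that the floor functions can absorb), is the real content of the theorem; the remainder is the volume bookkeeping described above.
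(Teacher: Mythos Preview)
Your approach is essentially identical to the paper's: reduce via Theorem~\ref{theorem:packing} to counting disjoint radius-$\tfrac{1}{2}$ balls inside $B(0,\tfrac{3}{2})$, compute the volume ratio $V_1/V_2=3^k$, and then substitute the bounds (\ref{lower_bound}) and (\ref{upper_bound}) on $\Delta_k$. The obstacle you flag in your last paragraph --- that $\Delta_k$ is an asymptotic density for packings of all of $\mathcal{R}^k$, not a finite ball whose radius is only three times the radius of the packed balls, so boundary loss is genuine --- is real, but the paper does not address it at all: it simply asserts $M_k=\lfloor \Delta_k\,V_1/V_2\rfloor=\lfloor 3^k\Delta_k\rfloor$ as an equality and proceeds. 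So you are being strictly more careful than the source; at the level of rigor the paper itself operates, your argument is already complete.
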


\begin{proof}
By Theorem~\ref{sp_thm1} we only need to estimate the number of disjoint spheres of radius $\frac{1}{2}\varepsilon$ that can be fitted in $B(x, \frac{3}{2}\varepsilon)$. 
The volume of a $k$-dimensional ball of radius $R$ is given by $\frac{\pi^{k/2}}{\Gamma(k/2+1)}R^k$. 
Plugging in the radius, the volumes of $B(x, \frac{3}{2}\varepsilon)$ and a sphere of radius $\frac{1}{2}\varepsilon$ are respectively 
$V_1 = \frac{\pi^{k/2}}{\Gamma(k/2+1)}(\frac{3}{2}\varepsilon)^k$ and $V_2 = \frac{\pi^{k/2}}{\Gamma(k/2+1)}(\frac{1}{2}\varepsilon)^k$. 
Since the optimal packing density is given by $\Delta_k$, we have 
\begin{eqnarray*} 
M_k  =  \left\lfloor \Delta_k \frac{V_1}{V_2} \right\rfloor 
     =  \left\lfloor 3^k \Delta_k\right\rfloor
\end{eqnarray*}
 
\noindent Combing with Eq.~\ref{lower_bound} and Eq.~\ref{upper_bound}, we obtain the inequality as desired. 
 \end{proof}

\vpara{Discussion.}
The lower bound of $M_k$ in Eq.~\ref{bound_M} suggests the feasibility to reconstruct scale-free network by $\varepsilon$-NN in the Euclidean space, when $k$, the dimension of embedding vector, is sufficiently large. 
For instance, when $k=100$, we have $M_k > 4.06\times 10^{17}$, which is enough to keep scale-free property holds for most real-world networks. 

\section{Our Approach}
\label{sec:model}

\vpara{General idea.} Inspired by our theoretical analysis, we propose a principle of \textit{degree penalty} for designing scale-free property preserving embedding algorithms: \textit{while preserving first- and second-order proximity, the proximity between vertexes that have high degrees shall be punished}. 
We give the general idea behind this principle below. 

Scale-free property is featured by the ubiquitous existence of ``big hubs'' which attract most edges in the network. 
Most of existing network embedding algorithms, explicitly or implicitly, attempt to keep these big hubs close to their neighbors by preserving first-order and/or second-order proximity~\cite{belkin2003laplacian,tang2015line,perozzi2014deepwalk}.
However, connecting to big hubs does not imply proximity as strong as connecting to vertexes with mediocre degrees. 
Taking social network as an example, where a famous celebrity may receive a lot of followers.
However, the celebrity may not be familiar or similar to her followers. 
Besides, two followers of the same celebrity may not know each other at all and can be totally dissimilar. 
As a comparison, a mediocre user is more likely to known and to be similar to her followers. 

From another perspective, a high-degree vertex $v_i$ is more likely to hurt the scale-free property, as placing more disjoint spheres in a closed ball is more difficult (Section~\ref{sec:approach}). 

The \textit{degree penalty} principle can be implemented by various methods.
In this paper, we introduce two proposed models based on our principle, implemented by spectral techniques and skip-gram models respectively. 

\subsection{Model I: DP-Spectral}
Our first model, Degree Penalty based Spectral Embedding (DP-Spectral), mainly utilizes graph spectral techniques.  
Given a network $G=(V, E)$ and its adjacency matrix $\mathbf{A}$, we define a matrix $\mathbf{C}$ to indicate the common neighbors of any two vertexes in $G$: 

\beq{
\label{eq:common_neighbors}
\mathbf{C} \triangleq \mathbf{A}^T\mathbf{A} - \diag{(\mathbf{A}^T\mathbf{A})}
}

\noindent where $C_{ij}$ is the number of common neighbors of $v_i$ and $v_j$. 
$\mathbf{C}$ can be also regarded as a measurement of second-order proximity, and can be easily extended to further consider first-order proximity by 

\beq{
\label{eq:matrix_c}
\mathbf{C}' \triangleq \mathbf{C} + \mathbf{A}
}

\noindent As we aim to model the influence of vertex degrees in our model, we further extend $\mathbf{C}'$ to consider degree penalty as

\beq{
\label{eq:matrix_w}
	\mathbf{W} \triangleq (\mathbf{D}^{-\beta})^T \mathbf{C}' \mathbf{D}^{-\beta} 
}

\noindent where $\beta \in \mathcal{R}$ is the model parameter used to indicate the strength of degree penalty, and $\mathbf{D}$ is a diagonal matrix where $D_{ii}$ is the degree of $v_i$.
Thus $\mathbf{W}$ is proportional to $\mathbf{C}'$ and is inversely proportional to vertex degrees. 

\vpara{Objective.}
Our goal is to learn vertex representations $\mathbf{U}$, where $\mathbf{u_i} \in \mathcal{R}^k$ is the $i$th row of $\mathbf{U}$ and represents the embedding vector of vertex $v_i$, and minimize

\beq{
\label{eq:objective_dps}
	\sum_{i,j} \Vert\mathbf{u}_i - \mathbf{u}_j\Vert^2 W_{ij} 
}

\noindent under the constraint 

\beq{
\label{eq:constraint_dps}
\mathbf{U}^T\mathbf{DU}=\mathbf{I}
}

\noindent Eq.~\ref{eq:constraint_dps} is provided such that the embedding vectors will not collapse onto a subspace of dimension less than $k$~\cite{belkin2003laplacian}. 

\vpara{Optimization.}
In order to minimize Eq.~\ref{eq:objective_dps}, we utilize graph Laplacian, which is an analogy to the Laplacian operator in multivariate calculus. 
Formally, we define graph Laplacian, $\mathbf{L}$, as follows:

\beq{
\label{eq:laplacian_def}
\mathbf{L\triangleq D - W}
}

\hide{
The power of graph laplacian lies in the fact that for for any real-valued vector $\mathbf{x}\in\mathcal{R}^n$, we have 

\begin{equation}
	\mathbf{x}^TL\mathbf{x} = \sum_{\{i,j\}\in E} (x_i-x_j)^2 W_{i,j}\label{laplacian:laplacian}
\end{equation}

Graph Laplacian provides a sense of "smoothness" of a function defined on vertices of the graph, the values of which are the components of $\mathbf{x}$, with respect to the connectivity of $G$.~\cite{kolaczyk2009statmodelsofnetwork} With Eq. (\ref{laplacian:laplacian}), the optimal of the objective (\ref{laplacian:objective}) under the constraint (\ref{laplacian:constraint}) is given by the $k+1$ eigenvectors of the lowest eigenvalues, discarding the one corresponding to $0$. 
}
\noindent Observe that 

\beq{
\label{eq:lap_and_obj}
\sum_{i,j} W_{ij}\Vert \mathbf{u}_i - \mathbf{u}_j\Vert^2 = \trace{(\mathbf{U^TLU})} 
}

\noindent The desired $\mathbf{U}$ minimizing the objective function (\ref{eq:objective_dps}) is obtained by putting 

\begin{equation}
	\mathbf{U} = [\mathbf{t}_1, \ldots, \mathbf{t}_k]
\end{equation}

\noindent where $\mathbf{t}_i$ is an eigenvector of $\mathbf{L}$.
In practice, we use normalized form of $\mathbf{W}$ and $\mathbf{L}$, (i.e., $\mathbf{D}^{-1/2}\mathbf{WD}^{1/2}$ and $\mathbf{I-D}^{-1/2}\mathbf{WD}^{1/2}$). 
\hide{
provided by 

\begin{eqnarray}
	\mathbf{W_0} & = & \mathbf{D}^{-1/2}\mathbf{WD}^{1/2}  \\
	\mathbf{L}_0 & = & \mathbf{D-W}_0 = \mathbf{I-D}^{-1/2}\mathbf{WD}^{1/2}
\end{eqnarray} 
}

\subsection{Model II: DP-Walker}
Our second method, Degree Penalty based Random Walk (DP-Walker), utilizes a skip-gram model, word2vec. 

We start with a brief description of the word2vec model and its applications on network embedding tasks. 
Given a text corpus, Mikolov et al. proposed word2vec to learn the representation of each word by facilitating the prediction of its context.
Inspired by it, some network embedding algorithms like DeepWalk~\cite{perozzi2014deepwalk} and node2vec~\cite{grover2016node2vec} define a vertex's ``context'' by co-occurred vertexes in random walk generated paths. 

Specifically, the Random Walk rooted at vertex $v_i$ is a stochastic process ${\mathcal{V}_{v_i}^k, k = 1,\ldots, m}$, where $\mathcal{V}_{v_i}^{k+1}$ is a vertex sampled from the neighbors of $\mathcal{V}_{v_i}^k$ and m is the path length.  
Traditional methods regard $P(\mathcal{V}_{v_i}^{k+1}|\mathcal{V}_{v_i}^{k})$ as a uniform distribution where each neighbor of $\mathcal{V}_{v_i}^{k}$ has the equal chance to be sampled. 

However, as our proposed Degree Penalty principle suggests, a neighbor $v_i$ of a vertex $v_j$ with high degree may not be similar to $v_j$. In other words, $v_j$ shall have less chance to be sampled as one of $v_i$'s context. 
Thus, we define the probability of the random walk jumping from $v_i$ to $v_j$ as 

\beq{
\label{eq:deepwalk}
	\rm{Pr}(v_j|v_i) \propto \frac{C'_{ij}}{(D_{ii} D_{jj})^\beta} 
} 

\noindent where $\mathbf{C}'$ can be found in Eq.~\ref{eq:matrix_c} and $\beta$ is the model parameter. 
According to Eq.~\ref{eq:deepwalk}, we find that $v_j$ will have a greater chance to be sampled when it has more common neighbors with $v_i$ and has a lower degree.
After obtaining random walk generated paths, we enable skip-gram to learn effective vertex representations for $G$ by predicting each vertex's context. This results in an optimization problem

\beq{
\label{eq:objective_dpw}
\argmin_{\mathbf{U}} -\log \rm{Pr}\left( \{ v_{i-w},\ldots, v_{i+w} \} \ v_i | \mathbf{u_i} \right)
}

\noindent where $2\times w$ is the path length we consider.
Specifically, for each random walk $\mathcal{V}_{v_i}$, we feed it to skip-gram algorithm~\cite{mikolov2013efficient} to update the vertex representations.
For implementation, we use Hierarchical Softmax~\cite{Morin05hierarchicalprobabilistic,Mnih2009hier} to estimate the concerned probability distribution. 

\hide{
\paragraph{Sampling Probability}
Deepwalk~\cite{perozzi2014deepwalk} does not take a weighted graph as an input. 
Instead, our preference can be manifested during the sampling of the walking. 
In Deepwalk, every next destination of the walk is chosen with a uniform distribution among the neighbors. We modify the probability of walking from $x$ to $y$, on condition that the walking does not stop at $x$:

\begin{equation}
	p(y|x) \propto \frac{C_{ij}}{(d_x d_y)^\beta} \label{mod:deepwalk}
\end{equation}

where $d_i$ is the degree of node $i$, and $C_{ij}$ is the corresponding element of the matrix defined by Eqn. (\ref{mod:common_neighbors}). 
}

\hide{
\subsection{Justification of our Modification}

 A consequence of the ubiquitous existence of big hubs in scale-free networks is the ultra-small world property \cite{cohen2003ultrasmall}. Average distance between nodes in a scale-free network is much smaller than observed in a random graph generated with the same nodes. 
 
 Our modifications are specifically based on the characteristics of social networks. 
 
 Eqn. (\ref{mod:deepwalk}) is in principle consistence with Eqns. (\ref{mod:1}) and (\ref{mod:2}). 

\subsection{Goal 1: Integrating the Influence of Degrees}

However, in social network, connecting to big hubs does not imply proximity as strong as connecting to nodes with mediocre degrees. An informal example is, while a famous celebrity may receive a lot of followers, it does not imply that these followers are close to the celebrity; at the same time, two Americans, who both follow the president of the United States on twitter, could be totally different from each other. 

Most graph embedding algorithms, explicitly or implicitly, attempts to preserve first-order and/or second-order proximity, without given consideration to the influences of degrees on proximity. This would result in overestimation of the proximity between most low-degree nodes. Eqn. (\ref{mod:2}) tries to integrate this consideration into the training process. 

\subsection{Goal 2: Integrating Second-order Proximity}
Still, it is not fair to part every pair of nodes whose degrees are high. When two nodes have high second-proximity, i.e. they have many common neighbors, it is reasonable that their representations are also close to each other. Eqn. (\ref{mod:1}) makes sure that connected nodes whose neighbors are similar gets higher weight.  
}

\hide{ 
\para{Summary.} In this section, we introduce our proposed Degree Penalty principle and two implementations by leveraging spectral techniques and a skip-gram model respectively. 
By similar methods, the proposed principle can also be implemented in other network embedding frameworks. 
We leave it to our future work. 
}

\hide{
\reminder{Description of algorithms. }
\subsection{Algorithms for Network Embedding}

In this section we introduce two network embedding algorithms: Deepwalk \cite{perozzi2014deepwalk}, and Laplacian Eigenmap \cite{belkin2003laplacian}. 
\paragraph{Deepwalk}

Deepwalk \cite{perozzi2014deepwalk} uses random walk to define analogies of sentences in a network, which is eligible as input for Word2Vec \cite{mikolov2013word2vec,mikolov2013word2vec2}, a neural word embedding algorithm to learn vector representation of words. 

Specifically, random walk in Deepwalk generated by sampling from uniform distribution. Specifically, a random walk rooted at vertex $v_i$ is a stochastic process ${\mathcal{W}_{v_i}^k, k = 1,\ldots, m}$ where $\mathcal{W}_{v_i}^{k+1}$ is sampled uniformly from the neighbors of $\mathcal{W}_{v_i}^k$. 

By applying random walk, an analogy to language model is established on networks, where vertices are treated as words, and sequences generated by random walks are viewed as sentences. Feeding these parameters to Word2Vec, Deepwalk gains the vector representation, aka. the embedding of the original graph. The use of random walk is motivated by the following two facts: 

\begin{enumerate}
	\item  random walks captures the local structure of a network, and have been used in a variety of tasks including content recommendation, community detection, link recommendation, node ranking, etc.  \cite{agarwal2007noderank,Pons2005,zhang2013itemrecom,yin2010linkrecom} 
	\item The degree of vertices in a social network follows power-law distribution, which applies also to word frequencies. 
\end{enumerate}
}

\hide{
\paragraph{Laplacian Eigenmap}

Laplacian Eigenmap is a spectral based techniques for network embedding and non-linear dimensionality reduction \cite{belkin2003laplacian}. Given a graph $G=(V, E)$ and its adjacency matrix $A$, Laplacian Eigenmap chooses an embedding of the network $\mathbf{y}_1, \ldots, \mathbf{y}_n \in \mathcal{R}^m $, to minimize

\begin{equation}
	\sum_{i,j} \Vert\mathbf{y}_i - \mathbf{y}_j\Vert^2 A_{ij} \label{laplacian:objective}
\end{equation}

under the constraint 
\begin{equation}
	Y^T DY = I \label{laplacian:constraint}
\end{equation}

where $Y = [\mathbf{y}_1^T, \ldots, \mathbf{y}_n^T] $ is a matrix whose $i$th row is $\mathbf{y}_i^T$, the vector representation of $v_i$. The constraint (\ref{laplacian:constraint}) is provided such that the embedding does not collapse onto a subspace of dimension less than $m$. \cite{belkin2003laplacian}

}

\section{Experiments}
\label{sec:exp}

\subsection{Experiment Setup}
\label{sec:exp_setup}

\vpara{Datasets.}
We use four datasets, whose statistics are summarized in Table~\ref{tb:data} for the evaluations. 

\begin{itemize}
\item Synthetic: We generate a synthetic dataset by the Preferential Attachment model~\cite{vazquez2003growing}, which describes the generation of scale-free networks.  
\item Facebook~\cite{leskovec2012learning}: This dataset is a subnetwork of Facebook\footnote{http://facebook.com}, where vertexes indicate users of Facebook, and edges denote friendship between users. 
\item Twitter~\cite{leskovec2012learning}: This dataset is a subnetwork of Twitter\footnote{http://twitter.com}, where vertexes indicate users of Twitter, and edges denote following relationships.
\item Coauthor~\cite{leskovec2007graph}: This network covers scientific collaborations between authors. Vertexes are authors.  Vertexes are authors. An undirected edge exists between two authors if they have coauthored at least one paper. 
\item Citation~\cite{tang2008arnetminer}: Similar to Coauthor, this is also an academic network, where vertexes are authors. Edges indicate citations instead of coauthor relationship. 
\item Mobile: This is a mobile network provided by PPDai\footnote{The largest unsecured micro-credit loan platform in China.}. Vertexes are PPDai registered users. An edge between two users indicates that one of the users has called the other. Overall, it consists of over one million calling logs.
\end{itemize}

\begin{table}[t]
\centering
\resizebox{0.9\linewidth}{!}{
\begin{tabular}{ccccccc}
\toprule
	& Synthetic &  Facebook &  Twitter & Coauthor &  Citation  &  Mobile\\
	\midrule
	$|V|$ &   \small 10000 &  \small 4039 &  \small 81306 &  \small 5242 &  \small 48521 &  \small 198959 \\
  \midrule
  $|E|$ & \small 399580 &  \small 88234 &  \small 1768149 &  \small 28980 &  \small 357235 &  \small 1151003 \\
\bottomrule
\end{tabular}
\normalsize
}
\caption{Statistics of datasets. 
\small $|V|$ indicates the number of vertexes and $|E|$ denotes the number of edges. \normalsize
}
\label{tb:data}
\end{table}




\vpara{Baseline methods.} We compare the following four network embedding algorithms in our experiments: 
\begin{itemize}
\item Laplacian Eigenmap (LE)~\cite{belkin2003laplacian}:  This represents spectral-based network embedding algorithms. It aims to learn the low-dimensional representation to expand the manifold where the data lie.
\item DeepWalk~\cite{perozzi2014deepwalk}: This represents skip-gram model based network embedding algorithms. It first generates random walks on the network, and defines the context of a vertex by its co-occurred vertexes in paths. Then, it learns vertex representations by predicting each vertex's context. Specifically, we perform $10$ random walks starting from each vertex, and each random walk will have a length of $40$. 
\item DP-Spectral: This is a spectral technique based implementation of our degree penalty principle. 
\item DP-Walker: This is another implementation of our approach. It is based on a skip-gram model. 
\end{itemize}
Unless otherwise specified, the embedding dimension for our experiments is $200$. 

\vpara{Tasks.}
We first utilize different algorithms to learn vertex representations for a certain dataset, then apply the learned embeddings in three different tasks: 
\begin{itemize}
\item Network reconstruction: this task aims to validate if an algorithm is able to preserve the scale-free property of networks. We evaluate the performance of different algorithms by the correlation coefficients between the reconstructed degrees and the degrees in the given network. 
\item Link prediction: given two vertexes $v_i$ and $v_j$, we feed their embedding vectors, $\mathbf{u_i}$ and $\mathbf{u_j}$, to a linear regression classifier and determine whether there exists an edge between $v_i$ and $v_j$. Specifically, we use $\mathbf{u}_i-\mathbf{u}_j$ as the feature, and randomly select about $1\%$ pairs of vertexes for training and evaluation. 
\item Vertex classification: on this task, we consider the vertex labels. For instance, in Citation, each vertex has a label to indicate the researcher's research area. Specifically, given a vertex $v_i$, we define its feature vector as $\mathbf{u_i}$, and train a linear regression classifier to determine its label.  
\end{itemize}
\subsection{Network Reconstruction}
\label{sec:exp:scalefree} 
\hide{
\vpara{Evaluation Metrics.} 
In the scale-free property reconstruction task, we consider three different correlation coefficients of the reconstructed degrees and original degrees: Pearson~\reminder{reference}, Spearman~\reminder{reference}, and Kendall~\reminder{reference}. 

Let $(d_1, \ldots, d_n)$ to indicate the degrees of the given network and $(x_1, \ldots, x_n)$ as our reconstructed degrees. 
The formula of Pearson correlation coefficient is: 

\beq{
\label{eq:pearson}
\mathtt{Pearson} = \frac{\sum_{i=1}^n (x_i-{\bar{x}})(d_i-\bar{d})}{\sqrt{\sum_{i=1}^n (x_i - \bar{x})^2} \sqrt{\sum_{i=1}^n (d_i - \bar{d})^2}}
}

\noindent where $\bar{x}$ indicates the average value of $(x_1, \ldots, x_n)$. Spearman correlation coefficient is calculated as follows: 

\beq{
\label{eq:spearman}
\mathtt{Spearman} = 1 - \frac{6 \sum z_i^2}{n(n^2-1)}
}

\noindent where $z_i$ is the difference between the ranks of corresponding values $d_i$, $x_i$. At last, the formula of Kendall correlation coefficient is: 

\beq{
\label{eq:kendall}
\mathtt{Kendall} = \frac{n_c - n_d}{\frac{1}{2}n(n-1)} 
}

\noindent where $n_c$ is the number of concordant pairs, and $n_d$ the number of discordant pairs. Concordant means for $i,j, i \neq j$, the sort order by $d$ and by $x$ agrees; i.e. if $x_i > x_j$, then $d_i > d_j$, vice versa. Discordant means the opposite. If $x_i=x_j$ or $d_i=d_j$, the pair is counted as neither concordant or discordant. 
}

\hide{
\paragraph{Recognition of Scale-free property.} It is necessary to confirm that the experimented networks follow a power-law distribution. We use powerlaw, a Python package for fitting and analysis of power-law distribution \cite{alstott2014powerlaw}. 

\paragraph{Experiment 1: Correlation Coefficients.} 

In this experiment, we proceed with the following steps: 

\begin{enumerate}
	\item Apply an graph embedding algorithm to a certain dataset.
	\item Reconstruct a graph with the method mentioned in Section \ref{sec:approach}, trying different choices of $\varepsilon$. 
\end{enumerate}

We measure the goodness of the embedding by the correlation coefficients of the reconstructed degrees and the original degrees. Specifically, we roll over all values of $\varepsilon$ ranging from $0.01$ to $1$ with step $0.01$, and choose the value which maximizes the Pearson's correlation coefficient between the original and recovered degrees. 

\paragraph{Experiment 2: Link Prediction.}
We feed the representation vector to a linear regression classifier to determine whether there exists a link between any two vertexes. The experiment is organized as follows. 

For each pair of nodes, we use the difference of the representation vector, $\mathbf{u}_i-\mathbf{u}_j$ as the feature. If $\mathbf{u}_i$ is connected to $\mathbf{u}_j$, we call it a positive instance, otherwise a negative instance. We sample randomly $2000$ nodes as training set, and $1000$ nodes as test set, where positive and negative instances are sampled with ratio $1:1$. 
}

\begin{table}[t]
\centering
\resizebox{\linewidth}{!}{
\begin{tabular}{|l|l|c|c|c|}
\hline
Dataset & Method ($\varepsilon$) & Pearson & Spearman & Kendall \\
\hline
 \multirow{4}*{Synthetic}  & LE (0.55) & 0.14 & 0.054 & 0.039 \\ \cline{2-5}
 & DeepWalk (0.91) & 0.47 & -0.22 & -0.18  \\ \cline{2-5}
  & DP-Spectral (0.52) & 0.92 & \textbf{0.79} & \textbf{0.63} \\ \cline{2-5}
  & DP-Walker (0.95) & \textbf{0.94} & 0.63 & 0.52 \\
 \hline
 
\multirow{4}*{Facebook} & LE (0.52) & 0.48 & 0.18 & 0.12 \\ \cline{2-5}
 & DeepWalk (0.81) & 0.73 & 0.65 & 0.49 \\ \cline{2-5} 
 
 & DP-Spectral (0.52) & \textbf{0.87} & \textbf{0.67} & 0.51 \\ \cline{2-5}
 
 & DP-Walker (0.84) & 0.75 & 0.73 & \textbf{0.57} \\ 
 \hline
 
 \multirow{4}*{Twitter}  & LE (0.81) & 0.17 & 0.19 & 0.17 \\ \cline{2-5}
 & DeepWalk (0.51) & 0.08 & 0.21 & 0.26 \\ \cline{2-5}
  & DP-Spectral (0.93) & \textbf{0.50} & \textbf{0.34} & \textbf{0.27} \\ \cline{2-5}
 & DP-Walker (0.087) & 0.40 & 0.33 & \textbf{0.27} \\ 
 \hline

 \multirow{4}*{Coauthor}  & LE (0.50) & 0.32 & 0.04 & 0.03 \\ \cline{2-5}
	& DeepWalk (0.92) & 0.66 & 0.31 & 0.25 \\ \cline{2-5}
  & DP-Spectral (0.51) & 0.64 & \textbf{0.69} & \textbf{0.55} \\ \cline{2-5}
  & DP-Walk (0.93) & \textbf{0.75} & 0.44 & 0.35 \\ 
 \hline
 \multirow{4}*{Citation}  & LE (0.99) & 0.11 & -0.27 & -0.20 \\ \cline{2-5}
 & DeepWalk (0.97) & 0.51 & 0.28 & 0.20 \\ \cline{2-5}
  & DP-Spectral (0.50) & 0.45 & \textbf{0.72} & \textbf{0.54} \\ \cline{2-5}
 & DP-Walker (0.98) & \textbf{0.62} & 0.65 & 0.50 \\ 
 \hline
  \multirow{4}*{Mobile}  & LE (0.51) & 0.10 & 0.05 & 0.04 \\ \cline{2-5}
 & DeepWalk (0.71) & 0.77 & 0.22 & 0.20 \\ \cline{2-5}
  & DP-Spectral (0.50) & 0.40 & \textbf{0.68} & \textbf{0.60} \\ \cline{2-5}
 & DP-Walker (0.78) & \textbf{0.93} & 0.22 & 0.20 \\ 
 \hline
 
\end{tabular}
}
\caption{Performance of different methods on scale-free property reconstruction. 
For each method, $\varepsilon$ (indicated after the corresponding method) is chosen so that Pearson is maximized. 
\normalsize
}
\label{tb:results:correlations}
\end{table}

\begin{figure*}[htbp]
\centering
\subfigure[Synthetic.]{
\label{fig:dim:PA}
\includegraphics{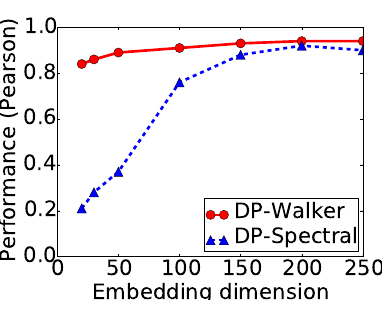}
}
\subfigure[Facebook.]{
\label{fig:dim:FB}
\includegraphics{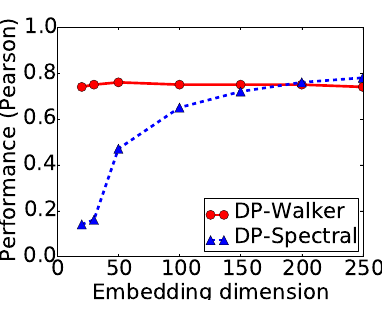}
}
\subfigure[Synthetic.]{
\label{fig:beta:PA}
\includegraphics{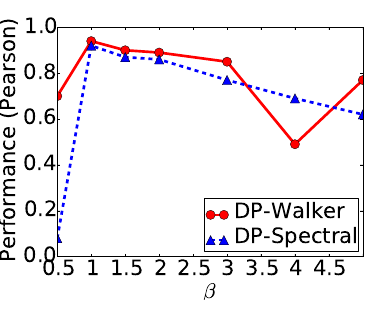}
}
\subfigure[Facebook.]{
\label{fig:beta:FB}
\includegraphics{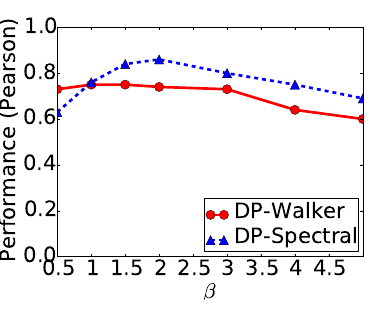}
}
\caption{Model parameter analysis. (a) and (b) demonstrate the sensitivity of the embedding dimension $k$ in Synthetic and Facebook dataset respectively. (c) and (d) present the sensitivity of the degree penalty parameter $\beta$. We omit the results on other datasets due to space limitation. }
\label{fig:beta}
\end{figure*}
\vpara{Comparison results.} In this task, after obtaining vertex representations, we use the $\varepsilon$-NN algorithm introduced in Section~\ref{sec:approach} to reconstruct the network. 
We then evaluate the performance of different methods on reconstructing power-law distributed vertex degrees by considering three different correlation coefficients of the reconstructed degrees and original degrees: Pearson's $r$~\cite{statistics}, Spearman's $\rho$~\cite{spearman}, and Kendall's $\tau$~\cite{kendalltau}. All of these statistics are used to evaluate some relationship between paired samples. Pearson's $r$ is used to detect linear relationships, while Spearman's $\rho$ and Kendall's $\tau$ are capable of finding monotonic relationships. 

To select $\varepsilon$, for each algorithm, we roll over all values of $\varepsilon$ ranging from $0.01$ to $1$ with step $0.01$, and choose the value which maximizes the Pearson's correlation coefficient between the original and recovered degrees. Our selection of Pearson's $r$ as evaluation metric is because of the scale-free property of degree distribution. 

From Table~\ref{tb:results:correlations}, we see that our algorithms outperform the baselines significantly. 
Especially, Pearson's $r$ of DP-Walker achieves at least $44.5\%$ improvement, and Spearman's $\rho$ of DP-Spectral achieves at least $84.3\%$ improvement. 
The good fitness of the vertex degree reconstructed by our algorithms suggests that we can better preserve the scale-free property of the network. 
Besides, we see that the best $\varepsilon$ to maximize Pearson's $r$ for DP-Spectral is more stable (i.e., around $0.51$) than other methods. 
Thus one can tune DP-Spectral's parameters more easily in practice. 

\vpara{Preserving Scale-free Property.} 
After reconstructing the network, for each embedding algorithm, we validate its effectiveness to preserve the scale-free property, by fitting the reconstructed degree distribution (e.g., Figure~\ref{fig:deg_citation_recovered_unmod} and (c)) to a theoretical power-law distribution~\cite{alstott2014powerlaw}. 
We then calculate the Kolmogorov-Smirnov distance between these two distributions and find that the proposed methods can always obtain better results compared to baselines (0.115 vs 0.225 averagely). 

\vpara{Parameter analysis.} We further study the sensitivity of model parameters: embedding dimension and degree penalty weight $\beta$. 
We only present the results on Synthetic and Facebook and omit others due to space limitation. 
Figure~\ref{fig:dim:PA} and \ref{fig:dim:FB} shows the Pearson correlation coefficients resulted by our algorithm with different embedding dimensions. 
When the embedding dimension grows, the performance increases significantly. The figures suggest that when embedding a network into Euclidean space, there is a dimension which is sufficient for preserving scale-free property, and further increase of the embedding dimension has limited effect.
Correlation does not change drastically for DP-Walker as the dimension increases, as the figure suggests. 
The figures also show that DP-Walker is able to obtain fairly high Pearson correlation even in a lower dimension and it requires an embedding dimension lower than DP-Spectral does for a satisfactory performance. 

We also study how $\beta$ influences the performance. Figure~\ref{fig:beta:PA} and \ref{fig:beta:FB} shows that DP-Spectral is more sensitive to the choice of $\beta$. This is largely due to the fact that in DP-Spectral, the influence of $\beta$ is manifested in the objective function (Eq.~\ref{eq:objective_dps}), which imposes a stronger restriction than its counterpart in DP-Walker, which is embodied in the sampling process. 
Figure~\ref{fig:beta:PA} and \ref{fig:beta:FB} shows that the optimal choice of $\beta$ varies from graph to graph. It makes sense, since $\beta$ can be viewed as a punishment on the degrees, and the influence of $\beta$ is therefore supposedly related to the topology of the original network. 


\begin{table}[t]
\centering
\resizebox{\linewidth}{!}{
\begin{tabular}{|c|c|c|c|c|}
\hline
\multirow{1}*{Dataset} & \multirow{1}*{Method} & \multirow{1}*{Precision} & \multirow{1}*{\ \ Recall \ \ } & \multirow{1}*{\ \ \ \ F1 \ \ \ \  } \\ 
\hline 
\multirow{4}*{Synthetic} 
& LE & 0.52	& 0.53	& 0.53 \\ \cline{2-5}
& Deepwalk & 0.51 &	0.51 & 0.51  \\ \cline{2-5}
& DP-Spectral & \textbf{0.64} &	 \textbf{0.68} & \textbf{0.66} \\ \cline{2-5}
& DP-Walker & 0.61 & 0.63	& 0.62 \\ \hline

\multirow{4}*{\ Facebook \ } 
 & LE & 0.75 & 0.92 & 0.83 \\ \cline{2-5}
 & Deepwalk & \textbf{0.84} & 0.97 &	 \textbf{0.90} \\ \cline{2-5}
 & DP-Spectral & 0.76 &	\textbf{0.98}	 & 0.85 \\ \cline{2-5}
 & DP-Walker & 0.82& 0.95 & 0.89 \\ \hline

\multirow{4}*{Twitter} 
& LE & 0.58  & 0.35 & 0.43 \\ \cline{2-5}
& Deepwalk & \textbf{0.65} & 	0.77 &	0.70 \\ \cline{2-5}
& DP-Spectral & 0.59 & \textbf{0.98} & \textbf{0.73} \\ \cline{2-5}
& DP-Walker  & 0.54 & 0.58 & 0.56  \\ \hline

\multirow{4}*{Coauthor} 
& LE & 0.61 & 0.83 & 0.70 \\ \cline{2-5}
& Deepwalk & 0.55 &	0.58	 & 0.56 \\ \cline{2-5}
&\ \  DP-Spectral \ \  & \textbf{0.62} & \textbf{0.89} &  \textbf{0.73}  \\ \cline{2-5}
& DP-Walker & 0.56 	& 0.66	& 0.61  \\ \hline

\multirow{4}*{Citation} 
& LE & 0.54 & 0.56 & 0.55 \\ \cline{2-5}
& Deepwalk & 0.54 &	0.56 & 0.55 \\ \cline{2-5}
& DP-Spectral & 0.52 &\textbf{0.99} & \textbf{0.68} \\ \cline{2-5}
& DP-Walker  & \textbf{0.55} &	0.57 & 0.56   \\ \hline

\multirow{4}*{Mobile} 
& LE & \textbf{0.75}  & 0.36 & 0.48 \\ \cline{2-5}
& Deepwalk & 0.55 &	0.60 &	0.57 \\ \cline{2-5}
& DP-Spectral & 0.63 & \textbf{0.89} & \textbf{0.74} \\ \cline{2-5}
& DP-Walker  & 0.54 & 0.58 & 0.56  \\ \hline

\end{tabular}
}
\caption{Experimental results of link prediction. 
}
\label{tb:exp:link}
\end{table}

\subsection{Link Prediction}
\label{sec:exp:link}
In this task, we consider the following evaluation metrics: Precision, Recall, and F1-score. 
Table~\ref{tb:exp:link} demonstrates the performance of different methods on the link prediction task. 
For our methods, we use	 the model parameter $\beta$ as optimized in Table \ref{tb:results:correlations}. 
We see that, in most cases, DP-Spectral obtains the best performance, which suggests that with the help of the proposed principle, we can not only preserve the scale-free property of networks, but also improve the effectiveness of the embedding vectors. 
\subsection{Vertex Classification}
\label{sec:exp:vertex} 
Table~\ref{tb:exp:multi} lists the accuracy of vertex classification task on Citation.
Our task is to determine an author's research area, which is a multi-classification problem.
We define features as vertex representation obtained by the four different embedding algorithms. 
Generally, from the table, we see that DP-Walker and DP-Spectral beat respectively Deepwalk and Laplacian Eigenmap. 
In particular, DP-Spectral achieves the best result for 5 out of 7 labels. 
Besides, we can also observe its stability of the performance. 
DP-Spectral's result on all labels is more stable than other methods. 
In comparison, LE achieves a satisfactory result for two labels, but for others the result can be poor. 
Specifically, the standard deviation of DP-Spectral is 0.04, while the value is 0.26 for LE and 0.1 for DeepWalk.


\section{Related Work}
\label{sec:related}

\begin{table}[t]
	\resizebox{\linewidth}{!}{
	\begin{tabular}{|c|c|c|c|c|c|c|c|} \hline
		Method & Acrh & CN & CS & DM & THM & GRA & UNK
		\\ \hline
		LE & 0.36 & \textbf{0.75} & 0.14 & 0.37 & 0.46 & 0.13 & \textbf{0.86}
		\\ \hline
		Deepwalk & 0.54 & 0.54 & 0.52 & 0.56 & 0.56 & 0.56 & 0.85
		\\ \hline
		DP-Walker & 0.56 & 0.57 & 0.54 & 0.58 & 0.58 & 0.55 & 0.85
		\\ \hline		
		DP-Spectral & \textbf{0.71} & 0.74 & \textbf{0.78} & \textbf{0.76} & \textbf{0.74} & \textbf{0.75} & 0.85
		\\ \hline
		\end{tabular}
	}
	\caption{Accuracy of multi-classification task. 
	The labels stand Architecture, Computer Network, Computer Science, Data Mining, Theory, Graphics, and Unknown, respectively.	}
	\label{tb:exp:multi}
\end{table}
\paragraph{Network embedding.} 
Network embedding aims to learn representations for vertexes in a given network. 
Some researchers regard network embedding as part of dimensionality reduction techniques.
For example, Laplacian Eigenmaps (LE)~\cite{belkin2003laplacian} aims to learn the vertex representation to expand the manifold  where the data lie. 
As a variant of LE, Locality Preserving Projections (LPP)~\cite{he2005face} learns a linear projection from feature space to embedding space. 
Besides, there are other linear~\cite{jolliffe2002principal} and non-linear~\cite{tenenbaum2000global} network embedding algorithms for dimensionality reduction.
Recent network embedding works take advancements in natural language processing, most notably models known as word2vec~\cite{mikolov2013efficient}, which learns the distributed representations of words. 
Building on word2vec, Perozzi et al. define a vertex's ``context'' by their co-occurrence in a random walk path~\cite{perozzi2014deepwalk}. 
More recently, Grover et al. propose a mixture of width-first and breadth-first search based procedure to generate paths of vertexes~ \cite{grover2016node2vec}. 
Dong et al. further develop the model to handle heterogeneous networks~\cite{dong2017metapath}. 
LINE~\cite{tang2015line}
decomposes a vertex's context into first-order (neighbors) and second-order (two-degree neighbors) proximity. 
Wang et al. preserve community information in their vertex representations~\cite{wang2017community}. 
However, all of above methods focus on preserving microscopic network structure and ignore macroscopic scale-free property of networks. 

\vpara{Scale-free Networks.}
The scale-free property has been discovered to be ubiquitous over a variety of network systems~\cite{mood1950introduction,newman2005power,clauset2009power}, such as 
the Internet Autonomous System graph~\cite{faloutsos1999power}, the Internet router graph~\cite{govindan2000heuristics}, the degree 
distributions of subsets of the world wide web~\cite{barabasi1999emergence}. 
\citeauthor{newman2005power} provides a comprehensive list of such work~\cite{newman2005power}.
However, investigating the scale-free property in a low-dimensional vector space and establishing its cooperation 
with network embedding 
have not been fully considered.

\section{Conclusion}
\label{sec:conclusion}
In this paper, we study the problem of learning the scale-free property preserving network embeddings. 
We first analyze the feasibility of reconstructing a scale-free network based on learned vertex representations in the Euclidean space by converting our problem to the Sphere Packing problem. 
We then propose the \textit{degree penalty} principle as our solution and introduce two implementations by leveraging spectral techniques and a skip-gram model respectively. 
The proposed principle can also be implemented using other methods, which is left as our future work. 
We at last conduct extensive experiments on both synthetic data and five real-world datasets to verify the effectiveness of our approach.

\paragraph{Acknowledgements.}
The work is supported by the Fundamental Research Funds for the Central Universities, 973 Program (2015CB352302), NSFC (U1611461, 61625107, 61402403), and key program of Zhejiang Province (2015C01027).
\normalsize

\balance
\bibliographystyle{aaai}
\bibliography{mybibtex}

\end{document}